\DeclareMathAlphabet{\mathpzc}{OT1}{pzc}{m}{it}
\DeclareMathOperator{\lie}{\pounds}
\newcommand{\be}{\begin{equation}}
\newcommand{\ee}{\end{equation}}
\newcommand{\ba}{\begin{equation}\begin{aligned}}
\newcommand{\ea}{\end{aligned}\end{equation}}
\let\oldproofname=\proofname
\renewcommand{\proofname}{\rm\bf{\oldproofname}:}
\newtheorem*{theorem}{Theorem}
\newtheorem*{corollary}{Corollary}
\newtheorem{lemma}{Lemma}
\begin{document}

\pagestyle{myheadings}

\title{Stationarity of Extremum Entropy Fluid Bodies in General Relativity}
\author{Joshua S. Schiffrin}
\email{schiffrin@uchicago.edu}
\affiliation{Enrico Fermi Institute and Department of Physics \\
  The University of Chicago \\
  5640 S. Ellis Ave., Chicago, IL 60637, U.S.A.}
\date{\today}

\begin{abstract}
We consider perfect fluid bodies (``stars'') in general relativity that are axisymmetric, asymptotically flat, and that admit a maximal hypersurface. We show that configurations that extremize the total entropy at fixed ADM mass, ADM angular momentum, and total particle number are stationary with circular flow. For such stars, this establishes that thermodynamic equilibrium implies dynamic equilibrium.
\end{abstract}

\maketitle

\tableofcontents

\newpage

\section{Introduction} \label{intro}

In the description of self-gravitating fluid bodies---``stars''---in general relativity, the fundamental conserved quantities that characterize the global state of a star can be taken to be the ADM mass $M$, the ADM angular momentum $J$, and the total number of particles $N$. 
The Einstein-perfect fluid equations predict an additional conserved global quantity: the total entropy, $S$. However, $S$ is not on equal footing with $M$, $J$, and $N$. The perfect fluid description is an idealized one in which dissipative processes---i.e., viscosity, heat conduction, and diffusion---cannot occur. In a more realistic fluid description that takes dissipative processes into account, a star that is not in equilibrium would be expected to have $S$ increasing with time, while $M$, $J$ and $N$ remain constant. One would expect the star to eventually settle down to a stationary end-state that is a (local) maximum of $S$ for the given values of $M$, $J$, and $N$.

Suppose, now, that we are given initial conditions for a non-stationary perfect fluid star. Imagine that we ``turn on'' a small amount of dissipation at the initial time, wait a moment, and then ``turn off'' the dissipation. We would expect $S$ to increase during the time that the dissipation is ``on''. Thus, we will have produced a perfect fluid state ``nearby'' the initial state that has higher entropy. This suggests that extrema of $S$ at fixed $M$, $J$, and $N$ must be stationary.
Furthermore, Lindblom \cite{Lindblom} has shown that stationary dissipative fluid stars are necessarily axisymmetric with ``circular flow,'' meaning that the fluid 4-velocity is a linear combination of the stationary and axial Killing vector fields, $t^a$ and $\phi^a$. Thus, if we consider a stationary perfect fluid star that lacks axisymmetry or circular flow and we do our experiment of turning on some dissipation, we would again expect the entropy to increase. This line of reasoning suggests we make the following conjecture: Any perfect fluid star that is an extremum of entropy---i.e., $\delta S=0$ for all linearized solutions that have $\delta M=\delta J = \delta N =0$---is stationary and axisymmetric with circular flow.\footnote{\footnotesize One may furthermore conjecture that any extremum of entropy that is in fact a local maximum---i.e., $\delta^2 S < 0$ for all perturbations that keep $M$, $J$, and $N$ fixed to both first and second order---is dynamically stable. In fact, under the \emph{assumptions} of stationarity, axisymmetry, and circular flow, this was shown to be true in \cite{GSW} (and it was earlier shown in the case of static, spherically symmetric radiation stars by \cite{SWZ}, and in the case of static, spherically symmetric barotropic stars by \cite{Roupas}).\medskip} 

In this paper we give a proof that any \emph{axisymmetric} perfect fluid star that admits a maximal (i.e., zero trace of the extrinsic curvature) hypersurface and that is at an extremum of entropy (at fixed $M$, $J$, and $N$) is stationary with circular flow. In other words, we will prove the above conjecture for stars that are assumed from the beginning to be axisymmetric, and that admit a maximal hypersurface. As we will discuss in section \ref{assumptions}, the latter assumption is not believed to exclude any physically reasonable spacetimes.

The question of proving the above conjecture has been previously addressed in the literature in the case where spherical symmetry is assumed, in which case $J$ is removed from the problem. In particular, Sorkin, Wald, and Zhang \cite{SWZ} considered radiation fluid enclosed in a spherical box. They argued---but as we will describe below, did not prove---that any configuration that admits a maximal hypersurface and that extremizes $S$ at fixed $M$ (there is no notion of $N$ for a radiation fluid) is static\footnote{\footnotesize Other related results have been proven in the spherically symmetric case. In \cite{SWZ} it was shown for configurations of a radiation fluid  that admit a time-symmetric (i.e., zero extrinsic curvature) hypersurface that being an extremum of entropy is sufficient \emph{and} necessary for being static, and Gao \cite{Gao} extended the ``sufficient'' part of this result to allow for a general 2-parameter equation of state (the ``necessary'' part holds only for 1-parameter equations of state). Fang and Gao \cite{FangGao} obtained some related results without assuming spherical symmetry, but they assumed from the start that the spacetime metric is static and that the redshifted temperature is uniform.\medskip}. Our stationarity proof is modeled after their staticity argument, so we will review it here. Afterwards, we will explain what the obstacles are in extending their method to the case of axisymmetric rotating stars, and how we overcome these obstacles.

The basic idea behind the argument of Sorkin, Wald, and Zhang is as follows: If there is any bulk fluid flow, it should be possible to construct a perturbation that slows down this flow and converts its energy into thermal energy, thereby raising the entropy. 
Constructing perturbed initial data that does this is nontrivial because the constraint equations must be satisfied. 
They start by assuming the existence of a maximal hypersurface $\Sigma$. The momentum constraint on such a surface is
\be\label{momconstraint}
D^j K_{ij}=-8\pi P_i,
\ee 
and the Hamiltonian constraint is
\be\label{hamconstraint}
R(h)=16 \pi \varepsilon + K_{ij}K^{ij},
\ee
where $h_{ij}$ is the spatial metric, $K_{ij}$ is the extrinsic curvature, $P_i$ is the fluid momentum density, and $\varepsilon$ is the fluid energy density (as viewed by an observer moving normal to $\Sigma$).
They construct their perturbation on $\Sigma$ by first doing a ``momentum scaling'' which scales down $K_{ij}$ and $P_i$ by a constant factor while keeping $h_{ij}$ fixed. As $h_{ij}$ is unchanged, so is $M$. 
The momentum scaling doesn't change the momentum constraint \eqref{momconstraint}, but it decreases the term $K_{ij}K^{ij}$ in the Hamiltonian constraint \eqref{hamconstraint}. They compensate for this by increasing $\varepsilon$, which can be shown to increase the entropy density of the radiation fluid. Thus, if the configuration is an extremum of $S$ at fixed $M$, we must have $K_{ij}=0$---i.e., $\Sigma$ must be a time-symmetric hypersurface. Additionally, by \eqref{momconstraint}, the fluid momentum density must be zero, implying that the fluid 4-velocity is normal to $\Sigma$. But existence of one maximal hypersurface $\Sigma$ implies \cite{Cantoretal} a unique foliation by maximal hypersurfaces in a neighborhood of $\Sigma$, so it follows that a neighborhood of $\Sigma$ can be foliated by time-symmetric hypersurfaces. But this means that the neighborhood, and therefore the entire spacetime, is static. 

Where their argument fails to be a proof is in the step where they compensate for the decrease in $K_{ij}K^{ij}$ with an increase in $\varepsilon$. Although this works to keep the Hamiltonian constraint satisfied \emph{inside} the fluid body, it does not work \emph{outside} the body where $\varepsilon=0$.\footnote{Their argument can be made into a proof by ``fixing up'' their perturbation via a linearized conformal transformation of the initial data, as we will do below in the general axisymmetric case.} 

The basic idea of our stationarity proof in the axisymmetric case, in parallel with the basic idea of the above staticity argument, is as follows: If the fluid flow has any ``non-circular'' part---i.e., a part perpendicular to the axial Killing vector field $\phi^a$---it should be possible to construct a perturbation that slows down this non-circular flow and converts its energy into thermal energy, thereby raising the entropy. Only the circular part of the flow contributes to the star's angular momentum, so such a perturbation should not change $J$. 

We will construct a perturbation analogous to the one of \cite{SWZ} described above to use in the axisymmetric case with general equation of state to prove our stationarity theorem. There are three main obstacles to doing this: (i) In addition to $M$, the angular momentum, $J$, must now be held fixed under the perturbation. (ii) The particle number, $N$, must also be held fixed. (iii) The perturbation must satisfy the constraint equations everywhere on $\Sigma$, including \emph{outside} the star where $\varepsilon=0$.

Obstacle (ii) (keeping $N$ fixed) is straightforward to overcome, since, as we will show, one has the freedom to keep the particle number density fixed while scaling down $P_i$ and increasing the entropy density. Overcoming obstacle (i) (keeping $J$ fixed) is more nontrivial, however. We will do this---following methods developed in \cite{tphipaper} for axisymmetric vacuum spacetimes---by using the axisymmetry to decompose $h_{ij}$, $K_{ij}$ and $P_i$ into ``axial'' and ``polar'' parts with respect to the axial Killing vector field. Our analog of the ``momentum scaling'' will be defined to scale down the axial part of $h_{ij}$ and the polar parts of $K_{ij}$ and $P_i$ by a constant factor, while keeping the polar part of $h_{ij}$ and the axial parts of $K_{ij}$ and $P_i$ fixed. As we will show, this keeps $M$ and $J$ unchanged. 

To overcome obstacle (iii) we need a way to ``fix up'' changes in the Hamiltonian constraint other than by changing $\varepsilon$ to compensate. Following \cite{SudarskyWald} and \cite{tphipaper}, we will do so via a linearized conformal transformation of the initial data, i.e., a linearized version of the Lichnerowicz method \cite{Lichnerowicz}. By choosing the conformal factor appropriately we can make sure that the linearized constraints are satisfied everywhere while also controlling $M$ and $J$ as needed to allow us to prove our stationarity theorem. 

Our proof can be summarized as follows: Given the spacetime of an axisymmetric perfect fluid star that admits a maximal hypersurface $\Sigma$, we construct perturbed initial data on $\Sigma$ in two parts: The first perturbation---which we call ``the momentum scaling perturbation''---scales down the axial part of $h_{ij}$ and the polar parts of $K_{ij}$ and $P_i$ while keeping the particle number density and entropy density fixed. This is then ``fixed up'' with a linearized conformal transformation so that the linearized constraints are satisfied and so that $J$, $N$, and $S$ remain fixed. It is shown that $M$ will be decreased by this perturbation unless the axial part of $h_{ij}$ and the polar parts of $K_{ij}$ and $P_i$ vanish. The second perturbation---which we call ``the heating perturbation''---increases the entropy density in an arbitrarily chosen region in the interior of the star while keeping all other variables fixed, and again, the constraints are fixed up via a linearized conformal transformation. It is shown that $S$ and $M$ will both be increased by this perturbation, while $N$ and $J$ remain fixed. Thus, a linear combination of these two perturbations can be found that increases $S$ at fixed $M$, $N$, and $J$, unless the axial part of $h_{ij}$ and the polar parts of $K_{ij}$ and $P_i$ vanish. Thus, if the star is an extremum of $S$ at fixed $M$, $N$, and $J$, these parts must vanish.

Stationarity is then proven as follows. We show that the vanishing of the axial part of $h_{ij}$ implies the existence of a $\phi$-reflection isometry of $\Sigma$ that reverses the direction of the axial Killing vector field. Then, the fact that the polar parts of $K_{ij}$ and $P_i$ vanish implies that $\Sigma$ is a ``$t$-$\phi$''-reflection symmetric surface. But it then follows that a neighborhood of $\Sigma$ can be foliated by $(t$-$\phi)$-reflection symmetric hypersurfaces, which implies that the neighborhood, and therefore the entire spacetime, is stationary. Circularity of the fluid flow follows from the vanishing of the polar part of $P_i$.

In section \ref{perfectfluid}, we briefly review the Einstein-perfect fluid system. In section \ref{assumptions}, we spell out our assumptions on the perfect fluid star spacetimes being considered. In section \ref{manifoldoforbits}, we show how to decompose the initial data into axial and polar parts, and we write the constraint equations as equations on the ``manifold of orbits'' of the axial Killing vector field. In section \ref{perturbation}, we construct the perturbation that will be used in our proof, in two parts. In section \ref{proof}, we state and prove our theorem. Finally, in section \ref{blackholes}, we discuss some possible generalizations and extensions.

Our index notational conventions are as follows: Lower case Latin indices from the early alphabet $(a,b,c,\dots)$ denote abstract spacetime indices, while lower case Latin indices from mid-alphabet $(i,j,k,\dots)$ denote abstract spatial indices. We do not distinguish notationally between tensors on a (3-dimensional) spatial hypersurface and tensors on the (2-dimensional) manifold of orbits of the axial Killing vector field. Lower case Greek indices $(\mu,\nu,\dots)$ denote coordinate labels and coordinate components of tensor fields. 

Additionally, three distinct metrics appear in the paper. The (4-dimensional, Lorentzian) spacetime metric is denoted $g_{ab}$, and its associated derivative operator is denoted $\nabla_a$. The (3-dimensional, Riemannian) spatial metric is denoted $h_{ij}$, its associated derivative operator is denoted $D_i$, and its scalar curvature is denoted $R(h)$. The metric on the (2-dimensional) manifold of orbits is denoted $\mu_{ij}$, its associated derivative operator is denoted $\mathcal D_i$, and its scalar curvature is denoted $\mathcal R(\mu)$.

\section{The Einstein-Perfect Fluid Formalism}\label{perfectfluid}

Here we review the Einstein-perfect fluid formalism; for more comprehensive overviews see, e.g., \cite{Friedmanbook, GSW}. The local field variables on a spacetime manifold $\mathcal M$ that describe the Einstein-perfect fluid system consist of the spacetime metric $g_{ab}$, the particle number density $n$, the entropy per particle $s$, and the fluid 4-velocity $u^a$ satisfying $u^a u_a = -1$. The stress energy tensor has the form 
\be
T_{ab} = \left( \rho + p \right)u_a u_b + p g_{ab}
\ee
where the energy density in the rest frame of the fluid, $\rho$, is determined by a prescribed equation of state,
\be
\rho = \rho(n,s),
\ee
and the pressure, $p$, is given by the Gibbs-Duhem relation
\be\label{gibbsduhem}
p=-\rho+\mu n+T s n,
\ee
where
\be
T\equiv \frac{1}{n}\frac{\partial \rho}{\partial s}
\ee
is the temperature and
\be
\mu \equiv \frac{\partial\rho}{\partial n}-Ts
\ee
is the chemical potential. The Einstein-perfect fluid field equations consist of Einstein's equation, 
\be
G_{ab}=8\pi T_{ab},
\ee
and conservation of the particle number current,
\be\label{ncons}
\nabla_a\left(nu^a\right) = 0.
\ee
The latter together with conservation of stress energy (which is, of course, implied by Einstein's equation) implies conservation of the entropy current,
\be\label{scons}
\nabla_a\left(snu^a\right) = 0.
\ee

The equation of state is required to be such that for all allowed $(n,s)$ we have
\be\label{eosconds}
\rho \geq 0, \qquad p\geq 0, \qquad T>0, \qquad 0\leq c_s^2 \leq 1,
\ee
where 
\be\label{cs}
c_s^2 \equiv \left(\frac{dp}{d\rho}\right)_s = \frac{\partial p/\partial n}{\partial\rho/\partial n} 
\ee
is the square of the speed of sound in the fluid. Under these conditions, the Einstein-perfect fluid system is well posed; see e.g., \cite{CBYork}. Additional assumptions on the second derivatives of the equation of state should hold so that the fluid is locally thermodynamically stable---see \cite{GSW}---but they don't play a role here. 

As we are interested in configurations that represent stars, we will be considering only asymptotically flat solutions in which $n$ has compact spatial support. The fundamental conserved global quantities are then taken to be the ADM mass, $M$, the ADM angular momentum, $J$, the total number of particles, 
\be\label{N}
N = -\int_\Sigma   n u^a \nu_a \sqrt{h}\,d^3x,
\ee
and the total entropy,
\be\label{S}
S = -\int_\Sigma   s n u^a \nu_a \sqrt{h}\,d^3x,
\ee
where $\Sigma$ is any Cauchy hypersurface, $\nu^a$ is the unit future-directed normal to $\Sigma$, and $\sqrt{h}\,d^3x$ is the induced volume element on $\Sigma$. By \eqref{ncons} and\eqref{scons}, $N$ and $S$ are conserved, i.e., independent of the choice of $\Sigma$.

Initial data on a Cauchy hypersurface is given (see, e.g., \cite{CBYork}) by $\Psi\equiv(h_{ij}, K_{ij}, n, s, u^i)$, where $h_{ij}$ is the induced metric, $K_{ij}$ is the extrinsic curvature, and $u^i$ is the projection of $u^a$ tangent to the surface. The quantity $u^a\nu_a$ appearing in \eqref{ncons} and\eqref{scons} can be written in terms of initial data by noting that $g_{ab}u^a u^b=-1$ gives
\be
u^a\nu_a = -\sqrt{1+u^2},
\ee
where we have defined
\be
u^2 \equiv h_{ij}u^i u^j.
\ee

The initial value constraint equations are
\be\label{constraint1}
D^j K_{ij} - D_i K^j_{\phantom jj} +8\pi P_i = 0
\ee
and
\be\label{constraint2}
-R(h) +K_{ij}K^{ij}-\left(K^i_{\phantom ii}\right)^2 + 16 \pi \varepsilon =0,
\ee
where the fluid momentum density, $P_c=-\nu^a h_c^{\phantom c b} T_{ab}$, is given by
\be\label{Jvrelation}
P_i =  \left(\rho+p\right)\sqrt{1+u^2\,}\, u_i ,
\ee
and the energy density, $\varepsilon=\nu^a \nu^b T_{ab}$, is given by
\be\label{varepsilon}
\varepsilon =  \left(\rho+p\right) \left( 1+u^2 \right) - p.
\ee
Clearly, on a maximal hypersurface---i.e., one with $K^i_{\phantom ii} =0$---the constraints reduce to \eqref{momconstraint} and \eqref{hamconstraint}.

\section{Assumptions}\label{assumptions}

Here, a \emph{perfect fluid star spacetime}, $(\mathcal M, g_{ab}, n, s, u^a)$, is a globally hyperbolic, topologically $\mathbb R^4$, asymptotically flat solution of the Einstein-perfect fluid field equations such that $n$ has compact spatial support, and such that the ``physical reasonableness'' conditions on the equation of state, \eqref{eosconds}, hold.

We will consider only perfect fluid star spacetimes that obey the following additional assumptions: (i) The spacetime is \emph{axisymmetric}, meaning that the symmetry group contains a $U(1)$ subgroup with spacelike orbits; equivalently, that the spacetime admits a spacelike Killing vector field with closed orbits, $\phi^a$, such that  we also have $\lie_\phi n=\lie_\phi s=\lie_\phi u^a=0$. (ii) The $U(1)$ symmetry group acts trivially, in the sense we will define at the end of this section. (iii) The spacetime admits a maximal ($K^i_{\phantom ii}=0$) asymptotically flat Cauchy hypersurface, $\Sigma$.  

While axisymmetry is a genuine restriction (we will discuss the possibility of relaxing it in section \ref{blackholes}), assumption (iii) on the other hand is thought to exclude only ``unphysical'' spacetimes containing white hole regions, as in Brill's explicit example, \cite{Brill}, of an asymptotically flat spacetime that does not admit a maximal hypersurface. (His example consists a ``pocket'' of expanding dust glued into the left wedge of the maximally extended Schwarzschild spacetime.) In fact, Bartnik \cite{Bartnik} has proven the existence of maximal hypersurfaces in asymptotically flat spacetimes that satisfy a ``uniformity condition'' in the interior\footnote{\footnotesize Existence of maximal hypersurfaces has been proven for asymptotically flat \emph{stationary} spacetimes---see \cite{ChruscielWald,BartnikChrusciel}---but obviously we do not want to assume stationarity here.\medskip}.

For asymptotically flat solutions of Einstein's equation with matter that satisfies the strong energy condition, the following two related results are known: (a) If the spacetime admits a maximal hypersurface $\Sigma$, then there exists a unique foliation by maximal hypersurfaces in a neighborhood of $\Sigma$; see \cite{Cantoretal}. (b) If the spacetime is axisymmetric, the axial Killing vector field, $\phi^a$, is tangent to any maximal hypersurface $\Sigma$. This follows from axisymmetry, asymptotic flatness, and the aforementioned uniqueness of the maximal foliation: If $\phi^a$ was not tangent to each maximal hypersurface, one would obtain an inequivalent maximal foliation by performing a rotation\footnote{A precise version of this argument was given in the vacuum black hole case in \cite{tphipaper}, which is easily generalized to the case where matter satisfying the strong energy condition is present.}. Since the perfect fluid matter under consideration here satisfies the strong energy condition---i.e., $(\rho+p)\geq0$ and $(\rho+3p)\geq0$, which follow from \eqref{eosconds}---assumptions (i) and (iii) imply that we have a unique maximal foliation in a neighborhood of $\Sigma$, and that the axial Killing vector field $\phi^a$ is tangent to each hypersurface in this foliation. 

As $\phi^a$ is tangent to $\Sigma$, we will henceforth denote it as a spatial vector, $\phi^i$, on $\Sigma$. Let $\mathcal A\subset\Sigma$ denote the rotation axis, i.e., the set of points in $\Sigma$ at which $\phi^i$ vanishes, and let $\widetilde\Sigma = \Sigma\setminus\mathcal A$. Our assumption, (ii), of trivial action of the $U(1)$ symmetry group is that we can write
\be\label{trivialaction}
\widetilde\Sigma = \mathcal O \times U(1),
\ee
where $\mathcal O$ is a 2-dimensional manifold, called the \emph{manifold of orbits of $\phi^i$}. Written this way, $\widetilde\Sigma$ has the structure of a trivial principal fiber bundle with fiber group $U(1)$ and base manifold $\mathcal O$. 
We believe that under our assumptions that space is topologically $\mathbb R^3$ and asymptotically flat, one automatically has $\widetilde\Sigma=\mathcal O \times U(1)$. We have not attempted to prove this, but if true it would make assumption (ii) redundant.

\section{Axial/Polar Decomposition, the Manifold of Orbits, and the Constraints}\label{manifoldoforbits}

Many of the constructions in this section may be viewed as a specialization of the ``manifold of orbits'' formalism of Geroch \cite{Geroch} to the case of a Riemannian manifold. Our exposition and notation closely follows \cite{tphipaper}, where the case of axisymmetric vacuum spacetimes in higher dimensions is considered.

On the maximal hypersurface $\Sigma$, we can uniquely decompose the fluid momentum density into its parts parallel and perpendicular to the axial Killing vector field, $\phi^i$, via
\be\label{Jdecomp}
P_i= \mathcal P \phi_i+\mathpzc p_i,
\ee
where $\phi^i\mathpzc p_i=0$. Similarly, the extrinsic curvature can be decomposed via
\be\label{Kdecomp}
K_{ij} = 2\mathcal K_{(i} \phi_{j)}+\kappa \phi_i \phi_j + k_{ij},
\ee
where $k_{ij}$ and $\mathcal K_i$ are orthogonal to $\phi^i$.  
We can decompose any axisymmetric tensor field $T^{ij\cdots}_{\phantom{ij\cdots}kl\cdots}$ on $\Sigma$ into it's ``axial'' and ``polar'' parts via such a decomposition, where the axial part consists of any terms in the decomposition that contain an odd number of copies of $\phi^i$, and the polar part consists of any terms in the decomposition that contain an even number of copies of $\phi^i$. Thus, the axial part of the fluid momentum density is $\mathcal P \phi_i$, while the polar part is $\mathpzc p_i$. Similarly, the axial part of the extrinsic curvature is $2\mathcal K_{(i} \phi_{j)}$, while the polar part is $k_{ij}+\kappa \phi_i \phi_j$. Under the action of a $\phi$-reflection isometry---if one exists---the axial part of an axisymmetric tensor will reverse sign, while the polar part will remain unchanged.

As explained in \cite{tphipaper}, any axisymmetric tensor field on $\widetilde\Sigma$ that is orthogonal in each of its indices to $\phi^i$ can be projected to the manifold or orbits, $\mathcal O$. The tensor fields $k_{ij}$, $\kappa$, $\mathcal K_i$, $\mathpzc p_i$, and $\mathcal P$ are examples of tensors that project to $\mathcal O$. As mentioned in section \ref{intro}, we will not distinguish notationally between such tensors on $\widetilde\Sigma$ and their projection to $\mathcal O$. We would like to write the initial value constraint equations, \eqref{momconstraint} and \eqref{hamconstraint}, completely in terms of quantities that project to $\mathcal O$. To do this, we need, in addition to the decompositions \eqref{Jdecomp} and \eqref{Kdecomp} of $P_i$ and $K_{ij}$, a way to write the spatial metric, $h_{ij}$, and the quantities in the constraints associated with it---i.e., $D_i$ and $R(h)$---in terms of objects defined on $\mathcal O$. Following \cite{tphipaper}, we can encode all of the information contained in $h_{ij}$ in a triplet, $\left(\Phi, A_i, \mu_{ij}\right)$, of tensor fields that project to $\mathcal O$. We now define these quantities.

First, $\Phi$ is defined as the norm-squared of $\phi^i$,
\be
\Phi \equiv \phi_i \phi^i,
\ee
which, as it vanishes in $\Sigma$ only on $\mathcal A$, is positive everywhere on $\widetilde\Sigma$. Next, $\mu_{ij}$ is defined as
\be
\mu_{ij} = h_{ij} -\frac{1}{\Phi}\phi_i \phi_j,
\ee
which, when projected to $\mathcal O$, is a (Riemannian) metric on $\mathcal O$. Finally, to define $A_i$, we first consider the object $\Phi^{-1} \phi_i$. As explained in \cite{tphipaper}, $\Phi^{-1} \phi_i$ defines a connection on the principal fiber bundle $\widetilde\Sigma = \mathcal O \times U(1)$. Choose a smooth global cross-section, $\mathcal S$, of $\widetilde\Sigma$, and let $A_i$ be the axisymmetric 1-form field on $\widetilde\Sigma$ whose pullback to $\mathcal S$ is equal to the pullback of $\Phi^{-1} \phi_i$ to $\mathcal S$ and is such that $\phi^i A_i=0$. Then $A_i$ contains all the nontrivial information contained in the connection $\Phi^{-1} \phi_i$, but is ``gauge-dependent'' in that it depends on a choice of cross-section $\mathcal S$.
Given local coordinates $(x^1, x^2)$ on $\mathcal S$, one can locally construct coordinates $(\varphi, x^1, x^2)$ on $\widetilde\Sigma$ by using $(x^1, x^2)$ to label the orbits of $\phi^i$ and by choosing  $(\partial/\partial\varphi)^i=\phi^i$. In these coordinates, the spatial metric, $h_{ij}$, takes the form
\be\label{metricform}
ds^2 = \Phi d\varphi^2 + 2\Phi A_\mu d\varphi dx^\mu +\left(\mu_{\mu\nu}+\Phi A_\mu A_\nu\right)dx^\mu dx^\nu.
\ee
Thus, to show that $h_{ij}$ admits a $\phi$-reflection isometry, we must show that the gauge freedom in choosing $\mathcal S$ (equivalently, a coordinate redefinition of the form $\varphi\to\varphi+f(x^\mu)$) can be used to set $A_\mu \to 0$, so that $h_{ij}$ has a ``block diagonal'' form. 

The \emph{curvature 2-form}, $F_{ij}$, is defined as the curvature of the connection $\Phi^{-1}\phi_i$, i.e., 
\be\label{curv1}
F_{ij} = 2 D_{[i} \left(\Phi^{-1}\phi_{j]}\right).
\ee
It is easily verified that $F_{ij}$ is orthogonal to $\phi^i$, and therefore projects to $\mathcal O$. As shown in \cite{tphipaper}, it follows that 
\be\label{curv2}
F_{ij} = 2 D_{[i} A_{j]}.
\ee
The Frobenius condition for local 2-surface orthogonality of $\phi^i$ is that the ``twist'' of $\phi^i$ vanishes:
\be
\phi_{[i}D_{j} \phi_{k]} = 0.
\ee
But from \eqref{curv1} we have
\be
\phi_{[i}D_{j} \phi_{k]}=\phi_{[i}D_{j}\left(\Phi \Phi^{-1} \phi_{k]}\right) = \frac{1}{2}\Phi\phi_{[i}F_{jk]}
\ee
so the Frobenius condition is equivalent to $F_{ij}=0$. 

As shown in \cite{tphipaper}, both $A_i$ and $F_{ij}$ can be smoothly extended to the rotation axis, $\mathcal A$, and therefore the relation
\be\label{FArelation}
F_{ij} = 2 D_{[i} A_{j]}
\ee
holds everywhere on $\Sigma$. By assumption, $\Sigma$ is simply connected, and hence if $F_{ij}$ vanishes then there exists a smooth function $\chi$ such that $A_i=D_i \chi$. In that case we can set $A_i \to 0$ by displacing $\mathcal S$ by $-\chi$ along the orbits of $\phi^i$; equivalently, we can redefine $\varphi\to\varphi+\chi$ and thereby set $A_\mu=0$ in \eqref{metricform}. 
Thus, $F_{ij}=0$ is necessary and sufficient for the existence of a $\phi$-reflection isometry of $h_{ij}$.

We now have the ingredients in place to write the constraint equations, \eqref{momconstraint} and \eqref{hamconstraint}, as equations that project to the manifold of orbits. The derivation differs from \cite{tphipaper} only by the addition of contributions from the fluid matter and in the specialization to a single $\phi^i$, so we will not repeat the derivation here. The axial part (i.e., the component along $\phi^i$) of the momentum constraint is
\be\label{c1}
\frac{1}{\sqrt{\Phi}} \mathcal D^i \left( \Phi^{3/2} \mathcal K_i \right)+8\pi \Phi \mathcal P=0,
\ee
where $\mathcal D_i$ is the derivative operator associated with $\mu_{ij}$. The polar part of the momentum constraint is
\be\label{c2}
\frac{1}{\sqrt{\Phi}} \mathcal D^j \left( \sqrt{\Phi} k_{ij} \right) - \frac{1}{2}\kappa \mathcal D_i \Phi - \Phi \mathcal K^j F_{ij} +8\pi \mathpzc p_i=0,
\ee
and the Hamiltonian constraint is
\be\label{c3}
-\mathcal R(\mu)+\frac{1}{\Phi}\mathcal D^2 \Phi-\frac{1}{2\Phi^2}\left(\mathcal D_i \Phi\right)\left(\mathcal D^i \Phi\right)+2\Phi\mathcal K^i \mathcal K_i+ \left[\frac{1}{4}\Phi F_{ij}F^{ij} +k^{ij}k_{ij}+\Phi^2 \kappa^2 \right]+16\pi\varepsilon=0,
\ee
where $\mathcal R(\mu)$ is the scalar curvature of $\mu_{ij}$.

\section{The Perturbation}\label{perturbation}

Perturbed initial data on $\Sigma$ that generates a solution to the linearized Einstein-perfect fluid equations on $\mathcal M$ consists of the collection of tensor fields 
\be\label{initialdata}
\delta \Psi \equiv \left(\delta h_{ij}, \delta K_{ij}, \delta n, \delta s, \delta u^i\right)
\ee
on $\Sigma$, subject only to the condition that these quantities satisfy the linearized constraint equations, i.e., the linearization of \eqref{constraint1} and \eqref{constraint2}. 

We will write our perturbation of interest, $\delta \Psi$, as a linear combination of two perturbations, $\delta_\text{\tiny MS} \Psi$ (``the momentum scaling perturbation'') and $\delta_\text{\tiny H} \Psi$ (``the heating perturbation''), each of which satisfies the linearized constraint equations individually. These two perturbations can be given the following interpretation: $\delta_\text{\tiny MS} \Psi$ scales down the ``non-circular'' (i.e., polar) part of the fluid motion, thereby decreasing the total energy, $M$. $\delta_\text{\tiny H} \Psi$ uses this energy to ``heat up'' an arbitrarily chosen portion of the star, thereby increasing $S$. Neither of these perturbations change $J$ or $N$.

Each of these two perturbations will in turn be constructed in two parts: a first part which satisfies the momentum constraint but not the Hamiltonian constraint, and a second part consisting on a linearized conformal transformation designed to ``fix up'' the Hamiltonian constraint. 

The first part of $\delta_\text{\tiny MS} \Psi$ will be designed to scale down the quantities $(F_{ij}, k^{ij},\kappa, \mathpzc p_i)$ by a constant factor while keeping the quantities $(\Phi, \mu_{ij}, \mathcal K^i, \mathcal P, \mathfrak N, s)$ fixed, where
\be
\mathfrak N \equiv \sqrt{h} n \sqrt{1+u^2}.
\ee
(Note: It is not immediately obvious that perturbed initial data, \eqref{initialdata}, can be chosen to accomplish this; we will show that it can.)
By inspection, this keeps unchanged both the axial and polar parts of the momentum constraint, \eqref{c1} and \eqref{c2},
and, since $\mathfrak N$ and $s\,\mathfrak N$ are the integrands of \eqref{N} and \eqref{S}, it keeps $N$ and $S$ fixed.
As we will show, it also does not change $M$ or $J$. Furthermore, it does not change the first 4 terms of the Hamiltonian constraint, \eqref{c3}, but it decreases the term in square brackets and, as we will show, it decreases the last term, $16\pi\varepsilon$. Thus, the first part of $\delta_\text{\tiny MS} \Psi$ will not satisfy the linearized Hamiltonian constraint. The second part of $\delta_\text{\tiny MS} \Psi$ will fix this---it will be a linearized conformal transformation chosen to make the linearized Hamiltonian constraint hold while keeping $J$, $N$, and $S$ fixed. We will show that this necessarily decreases $M$ unless each of the quantities $(F_{ij}, k^{ij},\kappa, \mathpzc p_i)$ vanish everywhere on $\Sigma$.

The first part of $\delta_\text{\tiny H}\Psi$ will be chosen to increase $s$ in an arbitrarily chosen region in the interior of the star (while keeping it fixed everywhere else) while keeping the quantities $(h_{ij}, K_{ij}, P_i, \mathfrak N)$ fixed everywhere (again, we will show that this can be done). This satisfies the linearized momentum constraint but not the linearized Hamiltonian constraint, since, as we will show, it increases $\varepsilon$ in the region where the entropy density is increased. Again, we will fix this up with a linearized conformal perturbation, and the resulting perturbation will be shown to increase $M$ and $S$ while keeping $J$ and $N$ fixed. 

Thus, unless the quantities $(F_{ij}, k^{ij},\kappa, \mathpzc p_i)$ 
vanish everywhere on $\Sigma$, we will be able to find a linear combination of $\delta_\text{\tiny MS} \Psi$ and $\delta_\text{\tiny H} \Psi$ that increases $S$ while keeping $M$, $J$, and $N$ fixed.

\subsection{The ``Momentum Scaling''  Perturbation}

We specify $\delta_\text{\tiny MS} \Psi$ as a sum of two parts:
\be\label{delta1Psi}
\delta_\text{\tiny MS} \Psi = \delta_\text{\tiny MS}^\text{\tiny(1)}\Psi + \delta_\text{\tiny MS}^\text{\tiny(2)} \Psi.
\ee
$\delta_\text{\tiny MS}^\text{\tiny(1)} \Psi$ will be the perturbation that scales down $(F_{ij}, k^{ij},\kappa, \mathpzc p_i)$, while $\delta_\text{\tiny MS}^\text{\tiny(2)} \Psi$ will be the linearized conformal perturbation. We will now specify these perturbations, starting with $\delta_\text{\tiny MS}^\text{\tiny(1)}\Psi$.

First we specify the perturbed gravitational variables, $\left(\delta_\text{\tiny MS}^\text{\tiny(1)} h_{ij}, \delta_\text{\tiny MS}^\text{\tiny(1)} K_{ij}\right)$, via
\ba \label{pert1a}
\delta_\text{\tiny MS}^\text{\tiny(1)} h_{ij} &= -2\phi_{(i}A_{j)}\\
\delta_\text{\tiny MS}^\text{\tiny(1)} K_{ij} &= -k_{ij}-\kappa \phi_i\phi_j -2\Phi\kappa \phi_{(i}A_{j)}-2\Phi \mathcal K_{(i}A_{j)},
\ea
or equivalently, in terms of the variables introduced in section \ref{manifoldoforbits}, 
\ba\label{pert1alt}
\delta_\text{\tiny MS}^\text{\tiny(1)} \Phi = 0, \qquad \delta_\text{\tiny MS}^\text{\tiny(1)} \mu_{ij}&=0, \qquad \delta_\text{\tiny MS}^\text{\tiny(1)} A_i = -A_i,\\
\delta_\text{\tiny MS}^\text{\tiny(1)} k_{ij} = -k_{ij}, \qquad \delta_\text{\tiny MS}^\text{\tiny(1)} \kappa&=-\kappa, \qquad \delta_\text{\tiny MS}^\text{\tiny(1)} \mathcal K_i =0.
\ea
Note that we are considering $\phi^i$ to be fixed as a vector field, i.e., it is unchanged under any perturbation we consider here. In particular, $\delta_\text{\tiny MS}^\text{\tiny(1)} \phi_i = \delta_\text{\tiny MS}^\text{\tiny(1)}(h_{ij} \phi^j) = \phi^j\delta_\text{\tiny MS}^\text{\tiny(1)}h_{ij} = - \Phi A_i$, which accounts for the last two terms in \eqref{pert1a}. 
It follows from \eqref{FArelation} that 
\be
\delta_\text{\tiny MS}^\text{\tiny(1)} F_{ij} = - F_{ij}.
\ee

Next, we would like to specify the perturbed fluid variables, $\left( \delta_\text{\tiny MS}^\text{\tiny(1)} n, \delta_\text{\tiny MS}^\text{\tiny(1)} s,\delta_\text{\tiny MS}^\text{\tiny(1)} u^i\right)$, so that the following three conditions hold: (i) the perturbed fluid momentum density satisfies
\be
\delta_\text{\tiny MS}^\text{\tiny(1)} P_i = -\mathpzc p_i -\Phi \mathcal P A_i,
\ee
or equivalently, in terms of the variables introduced in the previous section,
\be
\delta_\text{\tiny MS}^\text{\tiny(1)} \mathpzc p_i = -\mathpzc p_i, \qquad \delta_\text{\tiny MS}^\text{\tiny(1)} \mathcal P=0;
\ee
(ii) the perturbed particle number density on $\Sigma$ is zero, i.e.,
\be \label{deltan1}
\delta_\text{\tiny MS}^\text{\tiny(1)} \mathfrak N = \delta_\text{\tiny MS}^\text{\tiny(1)} \left(\sqrt{h}\, n\sqrt{1+u^2} \right) = \sqrt{h}\,\delta_\text{\tiny MS}^\text{\tiny(1)} \left(\, n\sqrt{1+u^2} \right) = 0,
\ee
where the last equality holds because \eqref{pert1a} gives $\delta_\text{\tiny MS}^\text{\tiny(1)}\sqrt{h}=0$;
and (iii) the perturbed entropy density on $\Sigma$ is zero, i.e.,
\be \label{deltas1}
\delta_\text{\tiny MS}^\text{\tiny(1)}\left(s \,\mathfrak N\right) = 0,
\ee
which, given (ii), is equivalent to $\hat \delta_\text{\tiny MS} s=0$. These three conditions uniquely determine $\left( \delta_\text{\tiny MS}^\text{\tiny(1)} n, \delta_\text{\tiny MS}^\text{\tiny(1)} s, \delta_\text{\tiny MS}^\text{\tiny(1)} u^i\right)$, and formulas for these can be calculated explicitly from the definition of $P_i$, \eqref{Jvrelation}, and from the formula for $p$, \eqref{gibbsduhem}, which in particular implies that
\be
\frac{\partial}{\partial n} \left(\frac{\rho+p}{n}\right) =\frac{1}{n}\frac{\partial p}{\partial n}= c_s^2 \frac{\rho+p}{n^2},
\ee
where $c_s^2$ was defined in \eqref{cs}. We find that these formulas are
\begingroup
\addtolength{\jot}{1.3em}
\ba\label{pert1b}
\delta_\text{\tiny MS}^\text{\tiny(1)} s &= 0
\\
\delta_\text{\tiny MS}^\text{\tiny(1)} n &=\frac{n\mathpzc p^i \mathpzc p_i}{\bigl[1+(1-c_s^2)u^2\bigr](1+u^2)(\rho+p)^2} 
= \frac{n\mu^{ij}u_i u_j}{\bigl[1+(1-c_s^2)u^2\bigr]}
\\
\delta_\text{\tiny MS}^\text{\tiny(1)} u^i &= -\frac{\delta_\text{\tiny MS}^\text{\tiny(1)} n }{n} c_s^2u^i +\frac{(-\mathpzc p^i+A_j \mathpzc p^j \phi^i)}{(\rho+p)\sqrt{1+u^2}}=\left[-\frac{\delta_\text{\tiny MS}^\text{\tiny(1)} n }{n} c_s^2\delta^i_{\phantom ij} -\mu^i_{\phantom ij} +A_j \phi^i\right]u^j,
\ea
\endgroup
where we have used $\mathpzc p^i =(\rho+p)\sqrt{1+u^2} \mu^i_{\phantom ij} u^j$ to write these in a way that explicitly shows that these define a smooth perturbation (i.e., that there is no problem near the edge of the star where $(\rho+p)\to 0$). From \eqref{varepsilon} we can calculate the perturbation to $\varepsilon$ under $\delta_\text{\tiny MS}^\text{\tiny(1)}\Psi$, and we find
\be
\delta_\text{\tiny MS}^\text{\tiny(1)}\varepsilon = -\frac{\mathpzc p^i \mathpzc p_i}{(1+u^2)(\rho+p)} 
= -(\rho+p)\mu^{ij}u_i u_j,
\ee
which is manifestly non-positive.

By inspection, $\delta_\text{\tiny MS}^\text{\tiny(1)} \Psi$ produces no change in the momentum constraints, \eqref{c1} and \eqref{c2}, and produces no change in the first four terms of the Hamiltonian constraint, \eqref{c3}. However it produces changes in the other terms, giving
\be
\delta_\text{\tiny MS}^\text{\tiny(1)} \left( -R(h)+K_{ij} K^{ij} +16\pi\varepsilon\right) = -\mathcal Q,
\ee
where 
\be\label{Q}
\mathcal Q \equiv \frac{1}{2}\Phi F_{ij}F^{ij} +2k^{ij}k_{ij}+2\Phi^2 \kappa^2 + \frac{16\pi\mathpzc p^i \mathpzc p_i}{(1+u^2)(\rho+p)}.
\ee
Clearly, $\mathcal Q$ is nonnegative, and is zero if and only if $F_{ij}=k_{ij}=\kappa=\mathpzc p_i=0$.

Next we specify the linearized conformal perturbation, $\delta_\text{\tiny MS}^\text{\tiny(2)} \Psi$. Our construction follows \cite{tphipaper}, but differs in that we must also specify the perturbed fluid variables and in that we do not have a black hole horizon present. The perturbed gravitational variables are given by 
\ba \label{pert2a}
\delta_\text{\tiny MS}^\text{\tiny(2)} h_{ij} &= \psi h_{ij}\\
\delta_\text{\tiny MS}^\text{\tiny(2)} K_{ij} &=-\frac{1}{2} \psi K_{ij},
\ea
where the scalar function $\psi$ will be specified below. The perturbed fluid variables are specified by requiring
\ba\label{conformalfluid}
\delta_\text{\tiny MS}^\text{\tiny(2)} \left(\sqrt{h}\,P_i\right)&=0\\
\delta_\text{\tiny MS}^\text{\tiny(2)} \mathfrak N&=0\\
\delta_\text{\tiny MS}^\text{\tiny(2)}  s&=0.
\ea
The (unique) values of $\left(\delta_\text{\tiny MS}^\text{\tiny(2)} n, \delta_\text{\tiny MS}^\text{\tiny(2)} s,\delta_\text{\tiny MS}^\text{\tiny(2)} u^i\right)$ that give \eqref{conformalfluid} are
\begingroup
\addtolength{\jot}{.7em}
\ba \label{pert2b}
\delta_\text{\tiny MS}^\text{\tiny(2)} s &= 0\\
\delta_\text{\tiny MS}^\text{\tiny(2)} n &=- \frac{3+2u^2}{2\bigl[1+(1-c_s^2)u^2\bigr]}\psi n \\
\delta_\text{\tiny MS}^\text{\tiny(2)} u^i &= \frac{c_s^2(3+4u^2)-2(1+u^2)}{2\bigl[1+(1-c_s^2)u^2\bigr]}\psi u^i,
\ea
\endgroup
and from these, one finds
\be\label{tildedelta1varepsilon}
\delta_\text{\tiny MS}^\text{\tiny(2)} \varepsilon = -\frac{1}{2}(3+4 u^2)(\rho+p)\psi.
\ee
$\delta_\text{\tiny MS}^\text{\tiny(2)} \Psi$ satisfies the momentum constraint,
\be
\frac{1}{\sqrt{h}}\, \delta_\text{\tiny MS}^\text{\tiny(2)} \left(\sqrt{h}D^j K_{ij} + 8\pi \sqrt{h}P_i \right) =0,\\
\ee 
but not the Hamiltonian constraint:
\ba
\delta_\text{\tiny MS}^\text{\tiny(2)} \left( -R(h)+K_{ij} K^{ij} +16\pi\varepsilon\right) &= 2 D^2 \psi + R(h)\psi -3 K_{ij} K^{ij}\psi + 16 \pi \delta_\text{\tiny MS}^\text{\tiny(2)} \varepsilon\\
&= 2 D^2 \psi -2 K_{ij} K^{ij}\psi - 8 \pi\left[ (\rho+3p)+2u^2(\rho+p)\right]\psi.
\ea
We choose $\psi$ to be the solution of
\be\label{lich0}
- D^2 \psi +\left[K_{ij} K^{ij} + 4 \pi(\rho+3p)+8\pi u^2(\rho+p)\right]\psi = -\frac{1}{2}\mathcal Q
\ee 
with the boundary condition $\psi\to0$ at infinity. As the quantity in square brackets multiplying $\psi$ is nonnegative, by standard arguments \cite{CantorBrill,ChruscielDelay} there exists a unique solution of \eqref{lich0} satisfying this boundary condition. It immediately follows that the perturbation $\delta_\text{\tiny MS} \Psi=\delta_\text{\tiny MS}^\text{\tiny(1)}\Psi+\delta_\text{\tiny MS}^\text{\tiny(2)}\Psi$ satisfies the linearized constraint equations. 

\begin{lemma}\label{lemma1}
The perturbation $\delta_\text{\tiny\em MS}\Psi$ (see \eqref{pert1a}, \eqref{pert1b}, \eqref{pert2a}, \eqref{pert2b}, \eqref{lich0}) has $\delta_\text{\tiny\em MS} N=\delta_\text{\tiny\em MS} S= \delta_\text{\tiny\em MS} J = 0$. Furthermore, $\delta_\text{\tiny\em MS} M \leq 0$ with equality holding if and only if $\mathcal Q=0$ everywhere on $\Sigma$.
\end{lemma}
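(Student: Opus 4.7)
The plan is to handle the four claims in order of increasing difficulty: the density conservations $\delta_\text{\tiny MS} N = \delta_\text{\tiny MS} S = 0$ are immediate from the construction; $\delta_\text{\tiny MS} J = 0$ follows after rewriting $J$ as a volume integral adapted to the axial Killing structure; and $\delta_\text{\tiny MS} M \leq 0$ reduces, after verifying that only the conformal piece contributes at infinity, to a maximum-principle analysis of the Lichnerowicz equation \eqref{lich0}.

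For $N$ and $S$, both parts of the perturbation were designed (cf.~\eqref{deltan1}, \eqref{deltas1}, \eqref{conformalfluid}) to annihilate $\mathfrak{N}$ and $s$ pointwise, so integrating gives $\delta_\text{\tiny MS} N = \delta_\text{\tiny MS} S = 0$. For $J$, I would first derive the bulk formula $J = -\int_\Sigma P_i\phi^i\sqrt{h}\,d^3x$ by contracting the background momentum constraint with the axial Killing vector $\phi^i$ (the $K_{ij}D^j\phi^i$ term drops by antisymmetry of $D\phi$) and applying Stokes' theorem, with the boundary integral at infinity giving the ADM angular momentum. Then under $\delta_\text{\tiny MS}^\text{\tiny(1)}$: $\phi^i\delta P_i = -\phi^i\mathpzc{p}_i - \Phi\mathcal{P}\phi^iA_i = 0$ (both contractions vanish by construction) and $\delta\sqrt{h} = \tfrac{1}{2}\sqrt{h}\,h^{ij}\delta h_{ij} = -\sqrt{h}\,\phi^jA_j = 0$; under $\delta_\text{\tiny MS}^\text{\tiny(2)}$, $\delta(\sqrt{h}P_i) = 0$ by \eqref{conformalfluid}, so the $\phi^i$-contraction again vanishes.

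For the mass, I would first show $\delta_\text{\tiny MS}^\text{\tiny(1)} M = 0$. Its metric perturbation $-2\phi_{(i}A_{j)}$ is traceless ($h^{ij}\phi_{(i}A_{j)} = \phi^jA_j = 0$), and a standard axisymmetric asymptotic estimate ($\Phi \sim \rho^2$, $A_\mu \sim h_{\mu\varphi}/\Phi = O(1/r^3)$, together with axisymmetry $\partial_\varphi \delta h_{ij} = 0$ and $\hat n^\varphi = 0$ to leading order) shows that the remaining boundary integrand $\partial_j\delta h_{ij}\hat n^i$ falls off fast enough that the sphere-at-infinity integral vanishes. The entire contribution therefore comes from the conformal rescaling $\delta h_{ij} = \psi h_{ij}$, giving the familiar formula
\begin{equation*}
\delta_\text{\tiny MS} M \;=\; -\frac{1}{8\pi}\oint_\infty D^i\psi\,\hat n_i\,dA \;=\; \frac{c_0}{2},
\end{equation*}
where $c_0$ is the monopole coefficient in the asymptotic expansion $\psi \sim c_0/r$.

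The hard part is controlling the sign of $c_0$ and characterizing the equality case. Since $V \equiv K_{ij}K^{ij} + 4\pi(\rho+3p) + 8\pi u^2(\rho+p) \geq 0$ and the source $-\tfrac12\mathcal{Q}$ in \eqref{lich0} is nonpositive, the maximum principle applied to $(-D^2+V)\psi = -\tfrac12\mathcal{Q}$ with $\psi\to 0$ at infinity gives $\psi \leq 0$ everywhere on $\Sigma$; since $\psi$ approaches zero from below, $c_0 \leq 0$, yielding $\delta_\text{\tiny MS} M \leq 0$. For the equality case I would write $\psi(x) = -\tfrac12\int_\Sigma G_L(x,y)\mathcal{Q}(y)\sqrt{h}\,d^3y$, where $G_L$ is the Green's function of $L \equiv -D^2 + V$ with zero boundary at infinity. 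Because $V \geq 0$, the strong maximum principle gives $G_L > 0$ strictly, and since $\oint_{S_r}G_L(\cdot,y)\,d\Omega > 0$ for every sufficiently large $r$, the monopole coefficient in the expansion $G_L(x,y) \sim H(y)/|x|$ satisfies $H(y) > 0$. Consequently $c_0 = -\tfrac12\int_\Sigma H(y)\,\mathcal{Q}(y)\sqrt{h}\,d^3y$, and the nonnegativity of $\mathcal{Q}$ combined with strict positivity of $H$ forces $c_0 = 0 \iff \mathcal{Q} \equiv 0$ on $\Sigma$, completing the proof.
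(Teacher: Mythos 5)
Your proposal is correct in substance and, for $N$, $S$, and $J$, matches the paper's proof (your volume-integral formula for $J$, obtained by contracting the momentum constraint with $\phi^i$, is exactly the alternative the paper relegates to a footnote, $J=\int_\Sigma\sqrt{h}\,\phi^iP_i$ --- note your overall sign differs from the paper's convention, though this does not affect $\delta J=0$; the main text instead reads off $\delta J=0$ directly from the surface integral \eqref{JADM}, since $\delta_\text{\tiny MS}^\text{\tiny(1)}$ leaves the axial part of $K_{ij}$ untouched and $\psi\to0$ at infinity). Where you genuinely diverge is the mass inequality. The paper never expands $\psi$ asymptotically and never invokes a Green's function: it introduces the auxiliary solution $\xi$ of the homogeneous equation \eqref{xieq} with $\xi\to1$ at infinity, notes $\xi>0$ by the strong maximum principle, and uses the Wronskian identity $D^i(\xi D_i\psi-\psi D_i\xi)=\tfrac12\xi\mathcal Q$ to obtain the exact formula \eqref{deltaM}, $\delta_\text{\tiny MS}M=-\tfrac{1}{16\pi}\int_\Sigma\xi\,\mathcal Q\,\sqrt h\,d^3x$, from which both the sign and the equality case are immediate. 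Your route reaches the same place, because your monopole coefficient $H(y)$ of the Green's function is (up to $4\pi$) precisely the paper's $\xi(y)$, and $c_0=-\tfrac12\int H\mathcal Q$ is formula \eqref{deltaM} in disguise; but it costs you extra machinery (existence of the expansion $\psi\sim c_0/r$, existence/symmetry/asymptotics of $G_L$, and an interchange of the $|x|\to\infty$ limit with the $y$-integral, which needs care since $\mathcal Q$ is not compactly supported --- its gravitational terms only decay).

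One step in your version is under-justified as written: strict positivity of $H$. Positivity of $G_L$ and of its sphere averages only gives $H\geq0$, whereas the ``$\delta_\text{\tiny MS}M=0\Rightarrow\mathcal Q\equiv0$'' direction needs $H>0$ everywhere. To close this you would apply the strong maximum principle (or Harnack) to the limit function $H$, which solves the homogeneous equation and tends to a positive constant at infinity --- i.e., you would reconstruct the paper's $\xi$-argument. Similarly, your falloff bookkeeping for $\delta_\text{\tiny MS}^\text{\tiny(1)}M=0$ (e.g., $A_\mu=O(1/r^3)$) is shakier than needed; the cleaner statement, which the paper uses, is simply that $\delta_\text{\tiny MS}^\text{\tiny(1)}h_{ij}=-2\phi_{(i}A_{j)}$ is purely axial (and trace-free, as you note), so it cannot contribute to \eqref{MADM}. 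None of these points break your argument, but the paper's identity-based formulation is both more economical and fully rigorous at the points where your version leans on asymptotic expansions.
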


\begin{proof}
The total particle number, \eqref{N}, and the total entropy, \eqref{S}, are
\ba
N &= \int_\Sigma \mathfrak N\, d^3x\\
S &= \int_\Sigma s \,\mathfrak N\, d^3x.
\ea
From \eqref{deltan1} and \eqref{deltas1} we have $\delta_\text{\tiny MS}^\text{\tiny(1)} N = \delta_\text{\tiny MS}^\text{\tiny(1)} S =0$. From \eqref{conformalfluid} we have $\delta_\text{\tiny MS}^\text{\tiny(2)} N=\delta_\text{\tiny MS}^\text{\tiny(2)} S=0$. Thus, $\delta_\text{\tiny MS} N=\delta_\text{\tiny MS} S=0$.

The ADM angular momentum is 
\be\label{JADM}
J=-\frac{1}{8\pi} \int_{S_\infty} K_{ij}\phi^i r^j\sqrt{\gamma}\,d^2 x,
\ee
where $S_\infty$ denotes a sphere with radius taken to infinity, $r^i$ is its outward pointing unit normal, and $\sqrt{\gamma}\,d^2 x$ is its induced volume element.
Only the axial part of $K_{ij}$ contributes to this integral, and this is unchanged under $\delta_\text{\tiny MS}^\text{\tiny(1)}\Psi$ (see \eqref{pert1alt}), so we have $\delta_\text{\tiny MS}^\text{\tiny(1)} J = 0$ \footnote{\footnotesize Another way to see this is to note that under the assumption of axisymmetry we may write \mbox{$J= \int_{\Sigma} \sqrt{h}\phi^i P_i =\int_{\Sigma} \sqrt{h}\Phi \mathcal P$}, thus, since $\delta_\text{\tiny MS}^\text{\tiny(1)} \sqrt{h}=\delta_\text{\tiny MS}^\text{\tiny(1)} \Phi=\delta_\text{\tiny MS}^\text{\tiny(1)}\mathcal P=0$, we have $\delta_\text{\tiny MS}^\text{\tiny(1)} J=0$.\medskip}. Since $\psi\to0$ at infinity we also have $\delta_\text{\tiny MS}^\text{\tiny(2)} J=0$. Thus, $\delta_\text{\tiny MS} J = 0$.

The perturbed ADM mass is given by
\be\label{MADM}
\delta M = \frac{1}{16\pi}\int_{S_\infty}r^i h^{jk} \left( D_k \delta h_{ij} - D_i \delta h_{jk} \right)\sqrt{\gamma}\,d^2 x.
\ee
Since $\delta_\text{\tiny MS}^\text{\tiny(1)} h_{ij}$ is purely axial, we have $\delta_\text{\tiny MS}^\text{\tiny(1)} M=0$ \footnote{\footnotesize One might question whether it makes sense to talk about the ADM mass or angular momentum of a perturbation that doesn't satisfy the constraints. For convenience, we take the attitude here that $\delta M$ and $\delta J$ are defined by the formulas \eqref{MADM} and \eqref{JADM}, independent of whether the constraints are satisfied. For our final perturbations $\delta_\text{\tiny MS}\Psi$ and $\delta_\text{\tiny H} \Psi$ that do satisfy the constraints, the values given by these formulas are then equal to the ADM values.\medskip}, so, substituting $\delta_\text{\tiny MS}^\text{\tiny(2)} h_{ij} = \psi h_{ij}$, we find
\be
\delta_\text{\tiny MS} M=\delta_\text{\tiny MS}^\text{\tiny(2)} M = -\frac{1}{8\pi}\int_{S_\infty}(r^i   D_i \psi) \sqrt{\gamma}\,d^2 x .
\ee
To evaluate this, let $\xi$ be the solution of
\be\label{xieq}
- D^2 \xi +\left[K_{ij} K^{ij} + 4 \pi(\rho+3p)+8\pi u^2(\rho+p)\right]\xi = 0
\ee
with boundary condition $\xi\to1$ at infinity. By the strong maximum principle \cite{Gilbarg} $\xi$ is strictly positive. Furthermore, by \eqref{lich0}, we have
\be
D^i \left(\xi D_i \psi - \psi D_i \xi\right) = \frac{1}{2}\xi\mathcal Q.
\ee
Integrating this equation over $\Sigma$ and using that $\psi\to0$ and $\xi\to1$ at infinity, we find
\be
\int_{S_\infty}(r^i D_i \psi)\sqrt{\gamma}\,d^2 x = \frac{1}{2}\int_\Sigma \xi\mathcal Q \sqrt{h}\,d^3 x,
\ee
so that we have
\be\label{deltaM}
\delta_\text{\tiny MS} M = -\frac{1}{16\pi}\int_\Sigma \xi \mathcal Q \sqrt{h}\,d^3 x.
\ee
Since $\xi>0$ and $\mathcal Q \geq 0$, we have $\delta_\text{\tiny MS} M \leq 0$ with equality holding if and only if $\mathcal Q = 0$ identically.
\end{proof}

\subsection{The ``Heating'' Perturbation}

As we did with $\delta_\text{\tiny MS} \Psi$, we specify $\delta_\text{\tiny H}\Psi$ as a sum of two parts:
\be
\delta_\text{\tiny H} \Psi = \delta_\text{\tiny H}^\text{\tiny(1)}\Psi+\delta_\text{\tiny H}^\text{\tiny(2)}\Psi.
\ee
$ \delta_\text{\tiny H}^\text{\tiny(1)}\Psi$ will increase $s$ in an arbitrarily chosen region in the interior of the star while keeping $\mathfrak N$ and $P_i$ unchanged, and it will not change $(h_{ij}, K_{ij})$. Again, $\delta_\text{\tiny H}^\text{\tiny(2)}\Psi$ will be a linearized conformal perturbation chosen to make the perturbation satisfy the linearized Hamiltonian constraint while not changing $J$.
We will now specify these perturbations, starting with $ \delta_\text{\tiny H}^\text{\tiny(1)}\Psi$.

Firstly, we specify
\ba
 \delta_\text{\tiny H}^\text{\tiny(1)} h_{ij} &= 0\\
 \delta_\text{\tiny H}^\text{\tiny(1)} K_{ij} &= 0.
\ea
Next, we specify the perturbed fluid variables $\left(  \delta_\text{\tiny H}^\text{\tiny(1)} n,  \delta_\text{\tiny H}^\text{\tiny(1)} s, \delta_\text{\tiny H}^\text{\tiny(1)} u^i\right)$ by the conditions
\ba\label{pert2conds}
 \delta_\text{\tiny H}^\text{\tiny(1)} s &= f \\
 \delta_\text{\tiny H}^\text{\tiny(1)} \mathfrak N &= 0\\
 \delta_\text{\tiny H}^\text{\tiny(1)} P_i &= 0,
\ea 
where $f$ is an arbitrarily chosen smooth function on $\Sigma$ that is positive in some arbitrarily chosen region in the interior of the star and zero everywhere else. The unique values of $\left(  \delta_\text{\tiny H}^\text{\tiny(1)} n, \delta_\text{\tiny H}^\text{\tiny(1)} u^i\right)$ that reproduce \eqref{pert2conds} are
\ba
 \delta_\text{\tiny H}^\text{\tiny(1)} n &= \frac{n u^2 \left(nT+\frac{\partial p}{\partial s}\right)f}{(\rho+p)\bigl[1+(1-c_s^2)u^2\bigr]}\\
 \delta_\text{\tiny H}^\text{\tiny(1)} u^i &= -\frac{(1+u^2) \left(nT+\frac{\partial p}{\partial s}\right)f}{(\rho+p)\bigl[1+(1-c_s^2)u^2\bigr]}u^i.
\ea
Note that since we have chosen $f$ to be supported only in the interior of the star where $(\rho+p)>0$, these define a smooth perturbation. We can calculate $ \delta_\text{\tiny H}^\text{\tiny(1)} \varepsilon$ from these formulas, and we find
\be
 \delta_\text{\tiny H}^\text{\tiny(1)} \varepsilon = n T f,
\ee
which is positive, by our assumption of positive temperature \eqref{eosconds}.

Clearly, $ \delta_\text{\tiny H}^\text{\tiny(1)}\Psi$ satisfies the linearized momentum constraint, but it violates the linearized Hamiltonian constraint, giving
\be
 \delta_\text{\tiny H}^\text{\tiny(1)} \left( -R(h)+K_{ij} K^{ij} +16\pi\varepsilon\right) = 16 \pi  \delta_\text{\tiny H}^\text{\tiny(1)} \varepsilon =16\pi  n T f.
\ee

Now we specify the linearized conformal perturbation, $\delta_\text{\tiny H}^\text{\tiny(2)} \Psi$. The perturbed gravitational variables are given by
\ba \label{pert2a2}
\delta_\text{\tiny H}^\text{\tiny(2)} h_{ij} &= \widetilde\psi h_{ij}\\
\delta_\text{\tiny H}^\text{\tiny(2)} K_{ij} &=-\frac{1}{2} \widetilde\psi K_{ij},
\ea
where $\widetilde\psi$ is chosen to be the solution of
\be\label{lich}
- D^2 \widetilde\psi +\left[K_{ij} K^{ij} + 4 \pi(\rho+3p)+8\pi u^2(\rho+p)\right]\widetilde\psi = 8\pi  n T f
\ee 
with boundary condition $\widetilde\psi\to0$ at infinity. The perturbed fluid variables, $\left(\delta_\text{\tiny H}^\text{\tiny(2)} n, \delta_\text{\tiny H}^\text{\tiny(2)} s,\delta_\text{\tiny H}^\text{\tiny(2)} u^i\right)$ are chosen to make \eqref{conformalfluid} hold with $\delta_\text{\tiny MS}^\text{\tiny(2)} \to \delta_\text{\tiny H}^\text{\tiny(2)}$, which again gives \eqref{pert2b} and \eqref{tildedelta1varepsilon}  with $\delta_\text{\tiny MS}^\text{\tiny(2)} \to \delta_\text{\tiny H}^\text{\tiny(2)}$ and $\psi \to \widetilde\psi$.

With these choices, it follows that $\delta_\text{\tiny H} \Psi =  \delta_\text{\tiny H}^\text{\tiny(1)} \Psi+\delta_\text{\tiny H}^\text{\tiny(2)} \Psi$ satisfies the linearized constraint equations.

\begin{lemma}\label{lemma2}
The perturbation $\delta_\text{\tiny\em H}\Psi$ has $\delta_\text{\tiny\em H} N= \delta_\text{\tiny\em H} J = 0$. Furthermore, it has $\delta_\text{\tiny\em H} S> 0$ and $\delta_\text{\tiny\em H} M> 0$.
\end{lemma}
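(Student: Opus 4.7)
The plan is to mirror the structure of the proof of Lemma~\ref{lemma1}, exploiting the fact that $\delta_\text{\tiny H}^\text{\tiny(1)}\Psi$ leaves the spatial geometry $(h_{ij},K_{ij})$ untouched, which simplifies several steps. For $\delta_\text{\tiny H} N=0$ and $\delta_\text{\tiny H} S>0$, I would write $N=\int_\Sigma \mathfrak{N}\,d^3x$ and $S=\int_\Sigma s\,\mathfrak{N}\,d^3x$: both parts of $\delta_\text{\tiny H}\Psi$ are arranged to fix $\mathfrak{N}$ (by \eqref{pert2conds} and by the heating analog of \eqref{conformalfluid}), so $\delta_\text{\tiny H} N=0$; and only the $\delta_\text{\tiny H}^\text{\tiny(1)} s=f$ contribution survives in $\delta_\text{\tiny H} S$, yielding $\delta_\text{\tiny H} S=\int_\Sigma f\,\mathfrak{N}\,d^3x>0$ since $f$ is strictly positive on an open interior subset of the star where $n>0$. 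For $\delta_\text{\tiny H} J=0$, I would use the axisymmetric volume form $J=\int_\Sigma \sqrt{h}\,\phi^i P_i\,d^3x$, which is manifestly preserved under $\delta_\text{\tiny H}^\text{\tiny(1)}\Psi$ (fixing $\sqrt{h}$ and $P_i$) and under $\delta_\text{\tiny H}^\text{\tiny(2)}\Psi$ (fixing $\sqrt{h}\,P_i$, with $\phi^i$ held fixed).

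The main step is $\delta_\text{\tiny H} M>0$, which proceeds by the same Green's identity argument as in Lemma~\ref{lemma1}. Since $\delta_\text{\tiny H}^\text{\tiny(1)} h_{ij}=0$, the first part contributes nothing to the ADM mass formula \eqref{MADM}, and all of the mass change comes from the conformal part:
\begin{equation}
\delta_\text{\tiny H} M = -\frac{1}{8\pi}\int_{S_\infty}\!\bigl(r^i D_i \widetilde\psi\bigr)\sqrt{\gamma}\,d^2x.
\end{equation}
I would then recycle the strictly positive auxiliary function $\xi$ introduced in Lemma~\ref{lemma1}, which solves \eqref{xieq} with $\xi\to 1$ at infinity. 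Using \eqref{lich} together with \eqref{xieq} to eliminate the bracketed coefficient, one obtains
\begin{equation}
D^i\bigl(\xi D_i \widetilde\psi - \widetilde\psi D_i \xi\bigr) = \xi\,D^2\widetilde\psi - \widetilde\psi\,D^2\xi = -8\pi\,\xi\,nTf.
\end{equation}
Integrating over $\Sigma$ and using the asymptotic conditions $\xi\to 1$ and $\widetilde\psi\to 0$ to drop the $\widetilde\psi D_i \xi$ boundary term then converts the boundary integral expression for $\delta_\text{\tiny H} M$ into
\begin{equation}
\delta_\text{\tiny H} M = \int_\Sigma \xi\,nTf\,\sqrt{h}\,d^3x,
\end{equation}
which is strictly positive because $\xi>0$, $T>0$ by \eqref{eosconds}, and $nf$ is positive on an open set in the interior of the star.

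I do not expect any serious obstacle: every ingredient has already been set up, and the argument is essentially a recycling of Lemma~\ref{lemma1}'s computation with the Lichnerowicz-type source replaced by the strictly positive $8\pi\,nTf$ (in place of $-\tfrac12 \mathcal Q$), which flips the sign of the final mass integral from nonpositive to strictly positive. The only mild subtlety is ensuring the boundary contributions in the Green's identity vanish, which rests on the standard decay of $\widetilde\psi$ provided by the same existence theory (\cite{CantorBrill,ChruscielDelay}) already invoked to define $\widetilde\psi$ via \eqref{lich}.
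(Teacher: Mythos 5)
Your proposal is correct and follows essentially the same route as the paper: the paper likewise gets $\delta_\text{\tiny H} N=0$, $\delta_\text{\tiny H} S=\int_\Sigma f\,\mathfrak N\,d^3x>0$ from the fixed $\mathfrak N$ and $\delta_\text{\tiny H}^\text{\tiny(2)}s=0$, and obtains $\delta_\text{\tiny H} M=\int_\Sigma \xi\, nTf\sqrt{h}\,d^3x>0$ by citing the Green's-identity argument of Lemma \ref{lemma1} with the source $-\tfrac12\mathcal Q$ replaced by $8\pi nTf$, which you simply write out explicitly. The only cosmetic difference is that the paper deduces $\delta_\text{\tiny H} J=0$ from the ADM surface integral \eqref{JADM} (using $\delta_\text{\tiny H}^\text{\tiny(1)}K_{ij}=0$ and $\widetilde\psi\to0$), whereas you use the equivalent volume formula $J=\int_\Sigma\sqrt{h}\,\phi^iP_i\,d^3x$ noted in the paper's footnote.
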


\begin{proof}
We have $ \delta_\text{\tiny H}^\text{\tiny(1)} \mathfrak N=\delta_\text{\tiny H}^\text{\tiny(2)} \mathfrak N = 0$, so $\delta_\text{\tiny H} N = 0$. 
We have $ \delta_\text{\tiny H}^\text{\tiny(1)} K_{ij}=0$ and $\widetilde\psi \to 0$ at infinity, so from \eqref{JADM} we have $\delta_\text{\tiny H} J = 0$. We have $\delta_\text{\tiny H}^\text{\tiny(2)} s=0$, so 
\be
\delta_\text{\tiny H} S = \delta_\text{\tiny H}^\text{\tiny(1)} S= \int_\Sigma f \mathfrak N \, d^3x>0.
\ee 
Finally, we have $ \delta_\text{\tiny H}^\text{\tiny(1)} h_{ij}=0$, so $ \delta_\text{\tiny H}^\text{\tiny(1)} M=0$, and by the same argument given in the proof of lemma \ref{lemma1}, but with the replacement $(-\mathcal Q) \to 16 \pi n T f$, we have
\be
\delta_\text{\tiny H} M = \delta_\text{\tiny H}^\text{\tiny(2)} M = \int_\Sigma \xi n T f \sqrt{h}\, d^3x> 0.
\ee
\end{proof}

\section{Stationarity Theorem}\label{proof}

\begin{theorem}
Let $(\mathcal M, g_{ab}, n, s, u^a)$ be a perfect fluid star spacetime satisfying the assumptions of section \ref{assumptions} with axial Killing vector field $\phi^a$. Suppose that $\delta S=0$ for all solutions of the linearized Einstein-perfect fluid equations that have $\delta M=\delta J=\delta N=0$. Then $(\mathcal M, g_{ab}, n, s, u^a)$ is stationary with circular flow---i.e., there exists an asymptotically timelike Killing vector field $t^a$ such that $\lie_t\phi^a=\lie_t n =\lie_t s=\lie_t u^a=0$, and such that $u^a$ is a linear combination of $t^a$ and $\phi^a$.
\end{theorem}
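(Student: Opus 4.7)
The plan is to combine the two perturbations of Lemmas~\ref{lemma1} and~\ref{lemma2} into a single variation at fixed $(M,J,N)$ that would raise $S$ unless $\mathcal Q \equiv 0$ on $\Sigma$, and then to decode the vanishing of each term of $\mathcal Q$ into stationarity and circular flow.

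First I would form
\[
\delta \Psi \;=\; \delta_\text{\tiny MS}\Psi \,+\, \alpha\,\delta_\text{\tiny H}\Psi, \qquad \alpha \;=\; -\,\frac{\delta_\text{\tiny MS} M}{\delta_\text{\tiny H} M}.
\]
By Lemma~\ref{lemma2}, $\delta_\text{\tiny H} M > 0$, so $\alpha$ is well-defined; by Lemma~\ref{lemma1}, $\alpha \geq 0$. Both lemmas give $\delta N = \delta J = 0$, and $\delta M = 0$ by construction. Since $\delta_\text{\tiny MS} S = 0$, one has $\delta S = \alpha\,\delta_\text{\tiny H} S$, and $\delta_\text{\tiny H} S > 0$. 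The extremum hypothesis $\delta S = 0$ therefore forces $\alpha = 0$, hence $\delta_\text{\tiny MS} M = 0$, and Lemma~\ref{lemma1} then gives $\mathcal Q \equiv 0$ on $\Sigma$. The manifest nonnegativity of each term in \eqref{Q} yields
\[
F_{ij} = 0, \qquad k_{ij} = 0, \qquad \kappa = 0, \qquad \mathpzc p_i = 0
\]
everywhere on $\Sigma$.

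The next step is to convert these four identities into a reflection symmetry of the initial data on $\Sigma$. Since $\Sigma \cong \mathbb R^3$ is simply connected, $F_{ij} = 0$ together with \eqref{FArelation} gives $A_i = D_i \chi$ for some smooth $\chi$, and the gauge change $\varphi \to \varphi + \chi$ in \eqref{metricform} puts $h_{ij}$ into block-diagonal form admitting the $\phi$-reflection isometry $\Theta:\varphi \mapsto -\varphi$. Under $\Theta$, the axial parts $2\mathcal K_{(i}\phi_{j)}$ of $K_{ij}$ and $\mathcal P\phi_i$ of $P_i$ reverse sign while polar parts are preserved; since the polar parts $k_{ij}+\kappa\phi_i\phi_j$ and $\mathpzc p_i$ have just been shown to vanish, $\Theta$ sends $(h_{ij},K_{ij},n,s,P_i)$ to $(h_{ij},-K_{ij},n,s,-P_i)$. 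This is exactly the statement that $\Sigma$ is a $(t$--$\phi)$-reflection symmetric initial data surface.

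The final step promotes this discrete initial-data symmetry to a spacetime Killing field. The uniqueness of the maximal foliation in a neighborhood of $\Sigma$, together with axisymmetry, lets the same construction be performed on each nearby leaf, yielding a $(t$--$\phi)$-reflection symmetry on every leaf; well-posedness of the Einstein--perfect fluid Cauchy problem pastes these into a discrete spacetime isometry of a neighborhood of $\Sigma$ that reverses both $\phi^a$ and the future-directed normal to the foliation. Following the strategy of \cite{tphipaper} in the axisymmetric vacuum case, composing the reflections across neighboring leaves produces a one-parameter group of isometries whose generator is the desired asymptotically timelike Killing field $t^a$, commuting with $\phi^a$ and preserving $(n,s,u^a)$; global hyperbolicity then extends $t^a$ to all of $\mathcal M$. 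Circular flow follows because $\mathpzc p_i = 0$ forces the $\Sigma$-tangential part of $u^a$ to be parallel to $\phi^i$, while the unit normal $\nu^a$ lies in $\mathrm{span}(t^a,\phi^a)$ by construction, so $u^a \in \mathrm{span}(t^a,\phi^a)$. I expect the most delicate step to be this last promotion of leaf-by-leaf reflections to a continuous one-parameter symmetry group: it requires verifying that the various maximal leaves and their associated reflections depend smoothly on the leaf parameter and compose coherently, essentially the matter analog of the rigidity argument carried out in the vacuum case of \cite{tphipaper}.
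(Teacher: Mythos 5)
Your proposal is correct and follows essentially the same route as the paper: a linear combination of $\delta_\text{\tiny MS}\Psi$ and $\delta_\text{\tiny H}\Psi$ at fixed $(M,J,N)$ forces $\mathcal Q\equiv0$ via the extremum hypothesis, the vanishing of $(F_{ij},k_{ij},\kappa,\mathpzc p_i)$ yields a $(t$--$\phi)$-reflection symmetric maximal slice, the reflections about the leaves of the unique maximal foliation are composed (as in the paper's $I_t=i_{t/2}\circ i_0$) into the stationary Killing field, and circularity follows from $\mathpzc p_i=0$. The only differences are cosmetic: you avoid the paper's explicit contradiction in the first step, and you defer to \cite{tphipaper} the leaf-by-leaf composition that the paper carries out explicitly via normal geodesics and the parameter choice $i_t[\Sigma(t+s)]=\Sigma(t-s)$.
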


\begin{proof}
Let $\Sigma$ be a maximal Cauchy surface, and consider a perturbation having the initial data 
\be
\delta \Psi=\delta_\text{\tiny H} \Psi + \lambda\, \delta_\text{\tiny MS} \Psi,
\ee
where $\delta_\text{\tiny MS} \Psi$ and $\delta_\text{\tiny H} \Psi$ were constructed in the previous section, and $\lambda$ is a constant. If $\mathcal Q$ is nonzero anywhere on $\Sigma$ then by lemma \ref{lemma1} we have $\delta_\text{\tiny MS} M < 0$, so we can choose $\lambda = \delta_\text{\tiny H} M / |\delta_\text{\tiny MS} M|$ which makes $\delta M=0$. But by lemmas \ref{lemma1} and \ref{lemma2} we have $\delta N = \delta J = 0$ and $\delta S > 0$, which is a contradiction. Thus, $\mathcal Q=0$ everywhere on $\Sigma$, where $\mathcal Q$ was defined by \eqref{Q}. Consequently, we have
\be
F_{ij}=k_{ij}=\kappa=\mathpzc p_i = 0
\ee
everywhere on $\Sigma$.

As shown in section \ref{manifoldoforbits}, $F_{ij}=0$ implies that there exists a $\phi$-reflection isometry $i_\Sigma:\Sigma\to\Sigma$ that reverses the sign of the axial Killing vector field (i.e., $i_\Sigma^*( \phi^i) = -\phi^i$). Then $k_{ij}=\kappa=\mathpzc p_i = 0$ implies that $K_{ij}$ and $P_i$ are odd under $i_\Sigma$ (i.e., $i_\Sigma^*(K_{ij})=-K_{ij}$ and $i_\Sigma^*(P_i)=-P_i$). It follows from the definition of $P_i$, \eqref{Jvrelation}, that $u^i$ is odd under $i_\Sigma$. 

Now consider the ``$t\text{-}\phi$''-reflection, $i:\mathcal M\to \mathcal M$, obtained by mapping a point $p$ lying at a proper time $\tau$ along a normal geodesic starting at $s\in\Sigma$ to the point $q$ lying at proper time $-\tau$ along a normal geodesic starting at $i_\Sigma (s)$. Since $i$ maps the initial data $(h_{ij}, K_{ij}, n,s, u^i)$ on $\Sigma$ into itself, by uniqueness of Cauchy evolution, $i$ is a symmetry of the solution.

Next consider the foliation by maximal hypersurfaces $\Sigma(t)$ in a neighborhood of $\Sigma=\Sigma(0)$; by \cite{Cantoretal}, such a foliation is guaranteed to exist and is unique. By repeating the above argument for each surface in the foliation, we have a $(t\text{-}\phi)$-reflection symmetry, $i_t$, about each $\Sigma(t)$. We can choose the parameter $t$ so that\footnote{The uniqueness of the maximal foliation allows us to make this choice. To see this explicitly, one can choose $t$ to be the proper time at which a particular normal geodesic starting at $\Sigma$ passes through $\Sigma(t)$.} $i_t[\Sigma(t+s)]=\Sigma(t-s)$. The desired 1-parameter group of time-translation symmetries, $I_t:\mathcal M\to \mathcal M$, is then obtained via composition of reflections:
\be
I_t = i_{t/2}\circ i_0
\ee
(it can easily be seen that $I_t$ takes $\Sigma(s)$ to $\Sigma(t+s)$). We have thus shown stationarity in a neighborhood of $\Sigma$, but, by uniqueness of Cauchy evolution, this implies stationarity on all of $\mathcal M$.

Finally, circularity of the fluid flow is shown by noting that the fact that $u^i$ is odd under $i_{\Sigma}$ means that $u^i$ is parallel to $\phi^i$. Thus, $u^a$ is a linear combination of $\phi^a$ and the unit normal vector field, $\nu^a$, to $\Sigma(t)$. But $t^a$ must also be a linear combination of $\nu^a$ and $\phi^a$ (having a component tangent to $\Sigma$ and orthogonal to $\phi^i$ would contradict the fact that $t^a$ is odd under $i$), thus $u^a$ is a linear combination of $t^a$ and $\phi^a$.
\end{proof}

The following corollary was proven in \cite{GSW}\footnote{\footnotesize The uniformity of $\Omega$, $\widetilde T$, and $\widetilde \mu$ was shown earlier by \cite{KatzManor}, and the first law then follows from a result of \cite{Iyer}.\medskip}:

\begin{corollary}
Under the hypothesis of the above theorem, the quantities $\Omega$, $\widetilde T\equiv \left|V\right|T$, and $\widetilde \mu\equiv \left|V\right|\mu$ are uniform, where 
\be
u^a = \frac{t^a+\Omega \phi^a}{\left|V\right|}. 
\ee
In other words, axisymmetric thermal equilibrium stars are rigidly rotating with uniform redshifted temperature and uniform redshifted chemical potential. Furthermore, the \emph{first law of thermodynamics},
\be
\delta M = \widetilde T \delta S + \widetilde\mu \delta N + \Omega \delta J,
\ee
holds for all solutions of the linearized Einstein-perfect fluid equations.
\end{corollary}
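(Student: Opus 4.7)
The plan is to combine a Lagrange-multiplier argument with local probe perturbations of the type constructed in section \ref{perturbation}. Because the theorem's hypothesis asserts that $\delta S=0$ on the subspace of linearized solutions satisfying $\delta M = \delta J = \delta N = 0$, and because variants of the perturbations in Lemmas \ref{lemma1} and \ref{lemma2} show that $\delta M$, $\delta N$, $\delta J$ are linearly independent functionals on the full space of linearized solutions, standard linear algebra produces constants $\beta, \alpha, \omega$ for which
\be
\delta M = \beta^{-1}\,\delta S - \alpha\,\delta N - \omega\,\delta J
\ee
on \emph{every} linearized solution. The corollary follows once I identify $\beta^{-1}, -\alpha, -\omega$ with $\widetilde T, \widetilde\mu, \Omega$ pointwise: this simultaneously forces those three fields to be constant and yields the first law.

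To pin down $\widetilde T$, I feed the heating perturbation $\delta_\text{\tiny H}\Psi$ of Lemma \ref{lemma2} into the displayed identity. Since $\delta_\text{\tiny H} N = \delta_\text{\tiny H} J = 0$, $\delta_\text{\tiny H} S = \int_\Sigma f\,\mathfrak N\,d^3x$, and $\delta_\text{\tiny H} M = \int_\Sigma \xi\, n T f\sqrt h\,d^3x$, the identity collapses to $\int_\Sigma \xi n T f\sqrt h\,d^3x = \beta^{-1}\int_\Sigma f n\sqrt{1+u^2}\sqrt h\,d^3x$ for every smooth nonnegative bump $f$ supported in the interior, hence $\xi T/\sqrt{1+u^2} = \beta^{-1}$ pointwise in the interior. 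Under the stationarity just proved by the theorem, equation \eqref{xieq} for $\xi$ coincides with the standard elliptic lapse equation $-D^2 N + N[K_{ij}K^{ij} + 4\pi(\rho+3p) + 8\pi u^2(\rho+p)]=0$ that enforces preservation of maximal slicing by the stationary Killing field $t^a$, and the boundary condition $\xi\to 1$ matches $N\to 1$ for the asymptotically unit-normalized $t^a$; uniqueness of the elliptic problem gives $\xi = N$. Since $\phi^a$ is tangent to $\Sigma$ one has $u^a\nu_a = -N/|V|$, so $N/\sqrt{1+u^2}=|V|$, and therefore $\widetilde T = |V|T = \xi T/\sqrt{1+u^2} = \beta^{-1}$ is a uniform constant.

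The identifications of $\widetilde\mu$ and $\Omega$ proceed in exact parallel using two further local probes. A ``particle-injection'' perturbation increases $n$ by an arbitrary bump while holding $s$, $h_{ij}$, $K_{ij}$, and $P_i$ fixed at the first stage, is closed up by the same conformal Lichnerowicz correction used in section \ref{perturbation}, and contributes only to $\delta M, \delta N, \delta S$. A ``spin-injection'' perturbation increases the axial fluid momentum density $\mathcal P\phi_i$ by an arbitrary bump while $\mathcal K_i$ is simultaneously adjusted so that the axial momentum constraint \eqref{c1} remains satisfied, is then closed up conformally, and contributes only to $\delta M$ and $\delta J$. In each case the maximum-principle integration used to derive \eqref{deltaM} expresses $\delta M$ as an integral of $\xi$ times a local source, while $\delta N$, $\delta S$, $\delta J$ have explicit local integrands; feeding these into the Lagrange identity and invoking arbitrariness of the bump forces $-\alpha = \widetilde\mu = |V|\mu$ and $-\omega = \Omega$ pointwise in the interior.

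The main obstacle will be the spin-injection perturbation: it is the only probe that genuinely varies $J$, so one must invert the axial momentum constraint \eqref{c1} for $\mathcal K_i$ with a prescribed inhomogeneity, confirm that the subsequent conformal fix-up does not contaminate the identification, and verify that the induced boundary integrals $\delta J$ and $\delta M$ from \eqref{JADM} and \eqref{MADM} combine cleanly under the maximum-principle integration used in the proof of Lemma \ref{lemma1}. Once the three pointwise equalities $\widetilde T = \beta^{-1}$, $\widetilde\mu = -\alpha$, $\Omega = -\omega$ are in place, uniformity of $\widetilde T, \widetilde\mu, \Omega$ is immediate, and substituting them back into the Lagrange identity of the opening paragraph yields $\delta M = \widetilde T\,\delta S + \widetilde\mu\,\delta N + \Omega\,\delta J$ for all solutions of the linearized Einstein-perfect fluid equations, completing the corollary.
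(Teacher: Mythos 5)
The paper does not actually prove this corollary at all: it invokes the stationarity theorem and then cites the literature (the uniformity of $\Omega$, $\widetilde T$, $\widetilde\mu$ from \cite{KatzManor} and \cite{GSW}, and the first law from Iyer's result \cite{Iyer}). Your plan is therefore necessarily a different route---an attempt at a self-contained derivation---and its first third is genuinely sound: the Lagrange-multiplier identity follows from the extremum hypothesis by pure linear algebra (with $c_M\neq 0$ guaranteed by the heating probe of Lemma \ref{lemma2}), and your identification of $\widetilde T$ is correct, including the key observation that the potential in \eqref{xieq} equals $4\pi(\varepsilon+\mathrm{tr}\,S)+K_{ij}K^{ij}$ for the perfect fluid, so that $\xi$ coincides with the lapse of $t^a$ adapted to the maximal foliation and $\xi/\sqrt{1+u^2}=|V|$ on the circular-flow solution. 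That piece would stand as a proof that $\widetilde T$ is uniform.

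The gap is in the remaining two identifications, which is exactly where the content of the corollary lies. The ``particle-injection'' probe is only asserted: you have not computed its stage-one $\delta\varepsilon$ (the analogue of the explicit formulas leading to \eqref{pert1b}), and without that computation the claim that the resulting $\delta M$ integrand is $\xi$ times $\mu\sqrt{1+u^2}$ per unit injected $\mathfrak N$ is unverified. More seriously, the ``spin-injection'' probe---the only one that moves $J$, and hence the only route to uniformity of $\Omega$ and to the $\Omega\,\delta J$ term---is not constructed, and the difficulty is not merely bookkeeping. Solving the axial momentum constraint \eqref{c1} for $\delta\mathcal K_i$ with a prescribed $\delta\mathcal P$ bump makes $\delta\mathcal K_i$ a \emph{nonlocal} functional of the bump, and the stage-one violation of the Hamiltonian constraint then contains the cross term $4\Phi\,\mathcal K^i\delta\mathcal K_i$ in addition to the fluid $\delta\varepsilon$. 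Consequently $\delta M$, obtained from the maximum-principle integration as in \eqref{deltaM}, is no longer of the form $\int(\text{local density})\times(\text{bump})$, so the ``arbitrariness of the bump forces a pointwise identity'' step that worked for $\widetilde T$ does not transfer; one needs a further integration-by-parts identity relating $\int\xi\,\Phi\,\mathcal K^i\delta\mathcal K_i$ and the surface integrals \eqref{JADM}, \eqref{MADM} to the frame-dragging part of $\Omega\,\delta J$, together with control of the decay of $\delta\mathcal K_i$ so that stage one contributes the intended $\delta J$ and no spurious $\delta M$. That missing identity is essentially the Hamiltonian derivation of the first law that the paper simply imports from \cite{Iyer}; until you supply it, uniformity of $\Omega$ and $\widetilde\mu$ and the first law itself remain unproved in your argument, as you yourself flag.
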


\section{Generalizations and Extensions}\label{blackholes}

An obvious generalization we would like to be able to make is to relax the assumption of axisymmetry in our stationarity theorem.
However, having an axial Killing vector field was essential in our method of proof, since without it we don't know how to decompose the fluid flow into a part that does not contribute to the angular momentum (polar) and a part that does contribute (axial), so that we can construct a perturbation that slows down the fluid flow without changing the angular momentum. 
One might therefore try to argue directly that there are no non-axisymmetric extrema of entropy. At first glance, the following argument would seem to do this, but, as we will explain, it does not actually work:

For an axisymmetric star, perturbations consisting purely of gravitational radiation cannot change $J$ to first order (see, e.g., the discussion surrounding equation (63) in \cite{GSW}). But for a non-axisymmetric star, they can. Thus, one might think that for a non-axisymmetric rotating star, one could always construct a perturbation that has $\delta S>0$ with $\delta M=\delta J=\delta N=0$ by: (i) slowing the rotation, (ii) using most of the energy thereby gained to heat up the star, and (iii) putting the rest of the gained energy and all of the gained angular momentum into a gravitational wave-packet very far away from the star. By putting the wave-packet arbitrarily far away, it can carry the given amount of angular momentum while having arbitrarily small energy, thereby guaranteeing that there will be enough energy left over to heat up the star. 

However, it has been shown \cite{CorvinoSchoen} that there exist asymptotically flat non-axisymmetric solutions of the initial value constraints that are exactly axisymmetric outside of a compact region. For such solutions, a wave packet that is too far away from the star would be in the axisymmetric region and therefore could not change $J$ to first order. Thus, the above argument does not work, and currently we do not see how to proceed without the assumption of axisymmetry.

Other generalizations include considering (i) spacetimes of dimension $d>4$ that are axisymmetric---meaning that if there are $N\leq\lfloor\frac{d-1}{2}\rfloor$ nonzero independent angular momenta $\left(J^1,\dots,J^N\right)$, then there exist $N$ mutually commuting axial Killing vector fields $\left(\phi_1^{\phantom1 i},\dots,\phi_N^{\phantom N i}\right)$ associated with them---and (ii) asymptotically anti-de Sitter spacetimes. As shown in \cite{tphipaper}, as long as the action of the rotational symmetry group is trivial in the sense discussed at the end of section \ref{assumptions} (in the higher dimensional case this is believed to be a genuine restriction, as discussed there), one can construct a smooth perturbation on $\Sigma$ that scales the curvature 2-form and the polar parts of the extrinsic curvature, where now these are objects that have extra ``internal'' indices corresponding the the multi-dimensional space of rotational symmetries. Thus, the constructions of section \ref{perturbation} should generalize straightforwardly to the case of higher dimensional stars having trivial action of the rotational isometries. Furthermore, as noted in \cite{tphipaper}, including a \emph{negative} cosmological constant does not affect our ability to solve the linearized Hamiltonian constraint \eqref{lich0} of a conformal perturbation with a given source, so our theorem should generalize straightforwardly to stars in asymptotically AdS spacetimes.

Finally, we discuss the analog of the conjecture of section \ref{intro} for the case of black hole spacetimes. 
Namely, that if the area of a cross section of the event horizon---which plays the role of entropy in black hole thermodynamics---is an extremum at fixed $M$ and $J$, then the black hole is stationary and axisymmetric\footnote{\footnotesize Note that the converse---i.e., stationary black holes are extrema of event horizon area at fixed $M$ and $J$---follows directly from the first law of black hole mechanics.\medskip}. If the area of a cross section of the event horizon is an extremum, then the expansion of the null geodesic generators must vanish at that cross section, because otherwise an infinitesimal diffeomorphism along the null generators (i.e., a ``time translation'') would increase the area, contradicting it being an extremum. But by Hawking's area theorem \cite{Hawking}, the expansion must be nonnegative everywhere, so the Raychaudhuri equation implies that the expansion and shear vanish everywhere on the horizon (in particular, this means the area is constant). It might then follow by arguments similar to those used to prove the rigidity theorem \cite{Hawking, FRW} that there exists a Killing vector field normal to the horizon, but we don't have a proof of this.

The above argument---which uses the ``global methods'' of the area and rigidity theorems---obviously has a very different character than the ``scaling of initial data'' method we used to prove our fluid star stationarity theorem. This can be understood by the fact that the event horizon is a globally defined quantity and cannot easily be located from initial data. However, \emph{apparent} horizons are defined in terms of initial data on a hypersurface, so one might try use our methods to prove stationarity of axisymmetric extrema of apparent horizon area on a maximal hypersurface with a trapped region. As the apparent horizon and event horizon coincide in the stationary case, this seems like a reasonable conjecture to make. But to give it the interpretation of ``extrema of black hole entropy are stationary," one would have to argue that apparent horizon area, rather than event horizon area, should play the role of black hole entropy. Such arguments have been made, e.g., in \cite{Hiscock, Nielsen}, but we will not take a position here. In any case, it appears that it is not actually possible to prove such a conjecture using our methods for the following reason: The momentum scaling and conformal perturbations move the location of the apparent horizon within the maximal slice, and ``moving it back'' (via an infinitesimal diffeomorphism) generally changes the area in an uncontrolled way. 

\section*{Acknowledgments}

I am very grateful to Robert Wald for his excellent guidance and for many extremely helpful discussions. I also thank Stephen Green, Travis Maxfield, and Kartik Prabhu for useful discussions. This research was supported in part by NSF grant PHY 12-02718 to the University of Chicago.
 
\bibliography{mybib}

\end{document}